\documentclass[article,12pt]{elsarticle}

\usepackage{graphicx}
\usepackage{lineno}
\usepackage{algorithmic}
\usepackage{amsmath}
\usepackage{amssymb}
\usepackage{amsfonts}
\usepackage{color}
\usepackage[colorlinks=true,
            linkcolor=red,
            urlcolor=blue,
            citecolor=blue]{hyperref}
\usepackage{natbib}
\usepackage{algorithm} 
\usepackage[all]{xy}
\usepackage{color}
\usepackage{DEF-bold}
\usepackage{DEF-script}

\def\tmin{t_{\rm min}}
\def\tmax{t_{\rm max}}

\def\S{\mathcal{S}}

\def\Pr{\mathbb{P}}
\def\Ex{\mathbb{E}}
\def\Ind{\mathbb{I}}

\def\p{^\prime}

\def\d{{\rm d}}


\def\tmin{t^{\rm min}} 
\def\tmax{t^{\rm max}} 
\def\tobs{t^{\rm obs}}

\def\qmax{Q^{\rm max}} 
\def\lmax{L^{\rm max}} 
\def\jump{J}
\def\ss{s^*}

\def\sss{s^{\dagger}}
\bibliographystyle{model2-names.bst}\biboptions{authoryear}

\newtheorem{theorem}{Theorem}
\newtheorem{lemma}{Lemma}[theorem]
\newtheorem{proposition}{Proposition}[theorem]
\newproof{proof}{Proof}
\newdefinition{remark}{Remark}[theorem]
\begin{document}

\journal{}

\begin{frontmatter}

\title{Geometric ergodicity of Rao and Teh's algorithm  for Markov jump processes}

\author[mim]{B{\l}a{\.z}ej Miasojedow}
\ead{bmia@mimuw.edu.pl}
\author[mim]{Wojciech Niemiro}
\ead{wniem@mimuw.edu.pl}
\address[mim]{Institute of Applied Mathematics, University of Warsaw\\ 
Banacha 2, 02-097 Warsaw,  Poland}
\begin{abstract}
\citet{RaoTeh2013a} 
introduced an efficient MCMC algorithm for sampling from the posterior distribution of a hidden 
Markov jump process. The algorithm is based on the idea of sampling virtual jumps. In the present paper we show that the
Markov chain generated by Rao and Teh's algorithm is geometrically ergodic. To this end we establish a geometric drift condition towards a small set.
\end{abstract}

\begin{keyword}
Continuous time Markov processes\sep  MCMC\sep Hidden Markov models\sep  Posterior sampling\sep Geometric ergodicity\sep Drift condition\sep Small set
\end{keyword}

\end{frontmatter}

\section{Introduction}
Markov jump processes (MJP) are natural extension of Markov chains to continuous time. 
They are widely applied in modelling of the phenomena of chemical, biological,
economic and other sciences.  

In many applications it is necessary to consider a situation where the trajectory of a MJP 
is not observed directly, only partial and noisy observations are available. Typically, 
the posterior distribution over trajectories is then analytically intractable. In the literature there exist several approaches 
to the above mentioned problem: based on sampling 
\citep{BoysWilkKirk2008,EFK,FaSh,FarSher2006,GolWilk2011,GolWilk2014,GolHendSher2015,Nod2,RaoTeh2013a,rao2012mcmc}, and also based on 
numerical approximations. 
 To the best of our knowledge the most general, efficient and exact method for a finite state space is that proposed by 
\citet{RaoTeh2013a}, and extended to a more general class
of continuous time discrete systems in \citet{rao2012mcmc}. 

In the present paper we establish geometric ergodicity of Rao and Teh's algorithm for homogeneous MJPs.
Geometric ergodicity is a key property of Markov chains which allows to use Central Limit Theorem for sample averages.

The rest of the paper is organised as follows. In Section~\ref{sec:intro} we briefly introduce hidden Markov jump processes, 
next in Section~\ref{sec:rao} we recall the Rao and Teh's algorithm.
The main result is proved in Section~\ref{sec:main}.

\section{Hidden Markov jump processes}\label{sec:intro}

Consider a continuous time homogeneous Markov process $\{X(t),{\tmin\leq t\leq \tmax}\}$ on a finite state space $\S$.
Its probability law is defined via the initial distribution $\nu(s)=\mathbb{P}(X(\tmin)=s)$ and the transition intensities 
\begin{equation*}
  Q(s,s^\prime)=\lim_{h\to 0}\frac{1}{h}\Pr(X(t+h)=s\p|X(t)=s)
\end{equation*}
for $s,s^\prime\in\S$, $s\not=s\p$. Let $Q(s)=\sum_{s^\prime\neq s} Q(s,s^\prime)$ denote the intensity of leaving state $s$. 
For definiteness, assume that $X$ has right continuous trajectories. We say $X$ is a Markov jump process (MJP).

Suppose that process $X$ cannot be directly observed but we can observe some random
quantity $Y$ with probability distribution $L(Y|X)$. Let us say $Y$ is the evidence
and $L$ is the likelihood. The problem is to restore the hidden trajectory of $X$ given $Y$.  From the
Bayesian perspective, the goal is to compute/approximate/sample from the posterior 
\begin{equation*}
 p(X|Y)\propto p(X)L(Y|X).
\end{equation*}
Function $L$, transition probabilities $Q$ and initial distribution $\nu$ are assumed 
to be known. To get explicit form of posterior distribution we consider a typical form of noisy observation. 
Assume that the trajectory $X([\tmin,\tmax])$ is observed independently at $k$  deterministic time points with
some random errors. Formally, we observe $Y=(Y_1,\ldots,Y_k)$ where
\begin{equation}\label{eq: obs}
 L(Y|X)=\prod_{j=1}^k L_j(Y_j|X(\tobs_j)),
\end{equation}
for some fixed known points $\tmin\leq \tobs_1<\cdots<\tobs_r\leq \tmax$.  
 
\section{Uniformization and Rao and Teh's algorithm}\label{sec:rao}

The so-called ``uniformization'' is a classical and well-known representation of a Markov jump process in terms
of potential times of jumps and the corresponding states \citep{jensen1953markoff}. Every  trajectory 
$X([\tmin,\tmax])$ is right continuous and piece-wise constant:  $X(t)=S_{i-1}$ for $T_{i-1}\leq t< T_i$, 
where random variables $T_i$ are such that $\tmin< T_1<\cdots<T_N<\tmax$ (by convention, $T_0=\tmin$ and $\tmax<T_{N+1}$).  
Random sequence of states $S=(S_1,\ldots,S_N)$ such that $S_i=X(T_i)$ is called a skeleton. 
We do not assume that $S_{i-1}\not=S_i$, and therefore the two sequences 
\begin{equation}\nonumber
 \begin{pmatrix} 
  T\\S
 \end{pmatrix}
 =
 \begin{pmatrix} \tmin & T_1 & \cdots &  T_i & \cdots & T_N & \tmax\\ 
S_0 & S_1 & \cdots & S_i & \cdots & S_N & \end{pmatrix}.
\end{equation}
represent the process $X$ in a redundant way: many pairs $(T,S)$ correspond to the same trajectory $X([\tmin,\tmax])$. 
Let $J=\{i \in [1\colon N]: S_{i-1}\neq S_{i}\}$, so that $T_J=(T_i:i \in J)$ are moments of \textit{true} jumps and
$T_{-J}=T\setminus T_J=(T_i:i \not\in J)$ are \textit{virtual} jumps.  {We write $[l:r]=\{l,l+1,\ldots,r\}$.}
By a harmless abuse of notation, 
we identify increasing sequences of points in $[\tmin,\tmax]$ with finite sets. Note that the trajectory of $X$ is
uniquely defined by $(T_J,S_J)$. Let us write $X\equiv (T_J,S_J)$ and also use the notation $\jump(X)=T_J$ for the set of
true jumps.

Uniformization obtains if $T$ is a (sequence of consecutive points of) a homogeneous Poisson process  with intensity $\lambda$, 
where $\lambda\geq \qmax=\max_sQ(s)$. The skeleton $S$ is then (independently of $T$) a discrete time homogeneous Markov chain with
the initial distribution $\nu$ and the transition matrix
\begin{equation}
 \label{eq: P_uniformization}
 P(s,s^\prime)=\begin{cases}
                \displaystyle\frac{Q(s,s^\prime)}{\lambda}&\text{  if } s\neq s^\prime\;;\\
                \\
                1- \displaystyle\frac{Q(s)}{\lambda}&\text{ if } s= s^\prime\;.
               \end{cases}
\end{equation}

\citet{RaoTeh2013a} exploit uniformization to construct a special version of Gibbs sampler which converges
to the posterior $p(X|Y)$. The key facts behind their algorithm are the following. First, given the trajectory $X\equiv
(T_J,S_J)$ the conditional distribution of virtual jumps $T_{-J}$ is that of the non-homogeneous (actually piece-wise homogeneous)
Poisson process with intensity $\lambda-Q(X(t))\geq 0$. Second, this distribution does not change if we introduce the likelihood.
Indeed, $L(Y|X)=L(Y|T_J,S_J)$, so $Y$ and $T_{-J}$ are conditionally independent given $(T_J,S_J)$ and thus $p(T_{-J}|T_J,S_J,Y)=p(T_{-J}|T_J,S_J)$. Third,
the conditional distribution $p(S|T,Y)$ is that of a hidden discrete time Markov chain and can be efficiently sampled from using
the algorithm FFBS (Forward Filtering-Backward Sampling). 

The Rao and Teh's algorithm generates a Markov chain $X_0,X_1,\ldots,X_m,\ldots$ (where $X_m=X_m([\tmin,\tmax])$ is a trajectory of MJP), convergent
to $p(X|Y)$. A single step, that is the rule of transition from $X_{m-1}=X$ to $X_{m-1}=X\p$ is described in Algorithm 1. 
\begin{algorithm}\label{alg: RT}
 \caption{Single step of Rao and Teh's algorithm.}
 \begin{algorithmic}
  \STATE {\bf input:} previous state $(T_J,S_J)\equiv X$ and observation $Y$.
  \STATE {(V)} Sample a Poisson process $V$ with intensity $\lambda-Q(X(t))$ on $[\tmin,\tmax]$. Let $T\p=T_J\cup V$ 
  \COMMENT{\textit{new set of potential times of jumps}}.
  \STATE {(S)}  Draw new skeleton $S\p$ from the conditional distribution $p(S\p|T\p,Y)$ by FFBS.  
  The new allocation of virtual and true jumps is via $J\p=\{i: S_{i-1}\p\neq S_{i}\p\}$ 
  \COMMENT{\textit{we discard new virtual jumps $T\p_{-J\p}$}}.
  \RETURN new state  $(T\p_{J\p},S\p_{J\p})\equiv X\p$.
  \end{algorithmic}
\end{algorithm}

Convergence of the algorithm has been shown by its authors in \citet{RaoTeh2013a}. It follows from the fact that the chain has the stationary distribution
$p(X|Y)$ and is irreducible and aperiodic, provided that $\lambda>\qmax$. 

\section{Main result}\label{sec:main}

Consider the Markov chain  $X_0,X_1,\ldots,X_m,\ldots$ generated by the Rao and Teh's algorithm. 
Its transition kernel is denoted by $A$. Put differently, $A(X,\d X\p)$ is the probability distribution corresponding to Algorithm 1.
Let $\Pi(\d X)$ be the posterior distribution of $X$ given $Y$.  In this paper we consider only Monte Carlo randomness, so $Y$ is fixed 
and can be omitted in notation.  The obvious standing assumption is that $L(Y|X)>0$ happens with nonzero probability if $X$ is given by $\nu$ and $Q$. 
It just means that the hidden MJP under consideration is ``possible''. 

\begin{theorem}\label{th: main}
Assume that the matrix of intensities $Q$ is irreducible and $\lambda>\qmax$. Then the chain is geometrically ergodic. There exist
constant $\gamma<1$ and function $M$ such that for every $X$,
\begin{equation}\nonumber
 \Vert A^m(X,\cdot)-\Pi(\cdot)\Vert_{\rm tv} \leq \gamma q^m M(X).
\end{equation}
\end{theorem}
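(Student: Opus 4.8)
The plan is to verify the standard Meyn–Tweedie criteria for geometric ergodicity: a geometric drift condition $AV \le \rho V + b\,\Ind_C$ for some function $V\ge 1$, constant $\rho<1$ and a small set $C$, together with a minorization $A(X,\cdot)\ge \beta\,\mu(\cdot)$ for $X\in C$. The natural Lyapunov function here is the number of true jumps, say $V(X)=\kappa^{\,|\jump(X)|}$ for a suitable $\kappa>1$; the reason to expect a drift is that the Poisson superposition in step (V) adds on average $(\tmax-\tmin)(\lambda-Q(X(t)))$ candidate points, but after FFBS in step (S) the expected number that survive as \emph{true} jumps is pulled back towards the posterior, which has a finite, $X$-independent expected jump count. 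So the main work is to show that when $|\jump(X)|$ is large, the conditional expected number of true jumps in $X\p$ is a definite fraction smaller — intuitively because FFBS reconstitutes $S\p$ from the law $p(S\p|T\p,Y)$, which, conditionally on a large number $n=|T\p|$ of potential times, behaves like a uniformized chain with per-step ``stay'' probability $1-Q(s)/\lambda \ge 1-\qmax/\lambda>0$, so a fixed positive fraction of the $n$ slots produce $S\p_{i-1}=S\p_i$ (virtual jumps) regardless of $Y$.

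Concretely, the steps I would carry out are: (1) Fix $\kappa>1$ small enough and bound $\Ex[\kappa^{|\jump(X\p)|}\mid X]$ by conditioning first on $T\p$ and then controlling $\Ex[\kappa^{|J\p|}\mid T\p,Y]$ for the hidden discrete-time chain on the grid $T\p$; using that the likelihood factors through only the $r$ observation points $\tobs_j$, one separates the $r$ ``observation-constrained'' transitions (each contributing at most a bounded factor, uniformly in the data, since $\lambda>\qmax$ keeps all one-step transition probabilities bounded below) from the remaining $|T\p|-r$ transitions, each of which is a genuine uniformization step with stay-probability $\ge 1-\qmax/\lambda$. This yields $\Ex[\kappa^{|J\p|}\mid T\p] \le C_0\,\big(\kappa(1-\qmax/\lambda)+ \qmax/\lambda\big)^{|T\p|-r}$ up to constants, and the base is $<1$ for $\kappa$ close enough to $1$. (2) Take expectation over $T\p = \jump(X)\cup V$ with $V$ Poisson of intensity $\le \lambda-\min_s Q(s)$: the $\jump(X)$ part contributes a factor $\le a^{|\jump(X)|}$ with $a<1$, and the Poisson part contributes a bounded constant (a Poisson generating function evaluated at a point $<1$). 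Altogether $AV(X)\le \rho V(X) + b$ with $\rho<1$, which is the drift condition with $C=\{V\le R\}=\{|\jump(X)|\le \log R/\log\kappa\}$ a sublevel set.

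For the small-set property: on $C$ the number of true jumps of $X$ is bounded by some $n_0$, so the set of starting trajectories is ``small'' in the sense that after one step we can reach, with probability bounded below uniformly over $X\in C$, the trajectory with \emph{no} true jumps in any chosen pattern — more usefully, I would show $A(X,\cdot)$ dominates $\beta\,\mu(\cdot)$ where $\mu$ is, e.g., the law obtained by running one Rao–Teh step from the empty-jump trajectory. The key point is that from any $X\in C$, step (V) puts down a Poisson process and with probability bounded below (depending only on $n_0$, $\lambda$, $\tmax-\tmin$) the resulting set $T\p$ contains a further independent Poisson component, after which irreducibility of $Q$ plus positivity of $L$ makes the FFBS output density bounded below on a fixed reference set; Harris recurrence then follows. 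Finally, I would combine the drift and minorization via the standard theorem (e.g. Meyn–Tweedie / Roberts–Rosenthal) to conclude $\Vert A^m(X,\cdot)-\Pi\Vert_{\rm tv}\le \gamma\,q^m\,M(X)$ with $M(X)=V(X)=\kappa^{|\jump(X)|}$.

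The main obstacle is step (1): getting a clean, data-uniform bound on $\Ex[\kappa^{|J\p|}\mid T\p,Y]$ for the FFBS-reconstructed skeleton. One must handle the backward filtering weights — they depend on $Y$ and could in principle concentrate the posterior skeleton on trajectories with many jumps near the observation times — and argue that, because only $r$ grid points carry likelihood factors and every individual uniformized transition probability is bounded below by $\min\{\min_{s\neq s'}Q(s,s')/\lambda,\ 1-\qmax/\lambda\}>0$, the ``jump-inflating'' effect of $Y$ is confined to a bounded number of transitions and hence costs only a multiplicative constant, leaving the overwhelming majority of the $|T\p|$ slots governed by the sub-unit factor $\kappa(1-\qmax/\lambda)+\qmax/\lambda$. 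Making this rigorous — essentially a uniform lower bound on conditional "stay" probabilities along the FFBS recursion — is where the real estimates lie; everything else is assembling known tools.
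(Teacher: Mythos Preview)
Your overall architecture --- establish a geometric drift in the number of true jumps toward a small sublevel set, then invoke Meyn--Tweedie/Roberts--Rosenthal --- is exactly the paper's. The differences, and the gaps, are in the execution.

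First, the paper uses the \emph{linear} Lyapunov function $V(X)=|\jump(X)|$ (plus a constant), not the exponential $\kappa^{|\jump(X)|}$ you propose. This matters: with the linear choice one only needs $\Ex(|J\p|\mid T\p,Y)\leq (1-\delta')\,|T\p|$ for $|T\p|$ large, and that is obtained by two short lemmas bounding the \emph{expected} number of jumps of a finite ergodic chain conditioned on its values at finitely many indices. Your exponential choice forces you to control the full MGF $\Ex(\kappa^{|J\p|}\mid T\p,Y)$ under the same conditioning, which is strictly harder and is precisely the ``main obstacle'' you flag but do not resolve. The paper sidesteps it entirely.

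Second, the device that makes the drift estimate clean --- and which you gesture at but do not state --- is to reinterpret FFBS as rejection sampling: draw $S\p$ from the \emph{prior} $p(S\p\mid T\p)$ and accept with probability proportional to $\prod_j L_j(S\p_{i_j})$. Acceptance depends only on $(S\p_{i_1},\ldots,S\p_{i_k})$, so the posterior skeleton law is a mixture over these $k$ values of the \emph{prior} chain conditioned on $(S\p_{i_1},\ldots,S\p_{i_k})$. This reduces the problem to bounding $\Ex(|J|\mid S_{i_1}=s_1,\ldots,S_{i_k}=s_k)$ uniformly in the $s_j$, which follows from ergodicity and the diagonal lower bound $P(s,s)\geq 1-\qmax/\lambda>0$. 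Your sentence ``the remaining $|T\p|-r$ transitions \ldots\ are genuine uniformization steps with stay-probability $\geq 1-\qmax/\lambda$'' is not correct as written: under the posterior \emph{every} transition is tilted by the backward weights, not just $r$ of them; the rejection-sampling reformulation is what localizes the effect of $Y$ to conditioning at $k$ indices.

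Third, two positivity assumptions you lean on are not granted by the hypotheses. Irreducibility of $Q$ does \emph{not} imply $\min_{s\neq s'}Q(s,s')>0$, so your ``every individual uniformized transition probability is bounded below'' fails in general. And the likelihoods $L_j(\cdot)$ need not be strictly positive (noiseless observations are an important case; the paper remarks on this explicitly). The paper's lemmas use only $P(s,s)\geq\eta>0$ and irreducibility, and its small-set argument fixes one skeleton $s^*$ with $\prod_j L_j(s^\dagger_j)>0$ (guaranteed by the standing assumption that the posterior is nondegenerate) rather than requiring all $L_j(s)>0$. Your small-set sketch would need the same repair.
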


We begin with some auxiliary results. 

\begin{lemma}\label{lem: Jfixends}
Let $S_0,S_1,\ldots,S_n$ be a Markov chain on a finite state space $\S$, with transition matrix $P$. Assume that $P$ is irreducible and
$P(s,s)\geq \eta>0$ for all $s\in\S$. Let $J=\{i\in [1\colon n] : S_i\not=S_{i-1}\}$. There exist $n_0$ and $\delta>0$ such that for every 
$n\geq n_0$ and for all $s_0,s_n\in\S$,
\begin{equation}\nonumber
 \Ex(|J||S_0=s_0,S_n=s_n)\leq (1-\delta)n. 
\end{equation}   
\end{lemma}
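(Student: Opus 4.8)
The plan is to show that the expected number of jumps $|J|$ in a long conditioned Markov chain is strictly below $n$ by a fixed fraction, exploiting the uniform lower bound $\eta$ on the holding probabilities $P(s,s)$. First I would reduce everything to a single ``block'' estimate: because $P$ is irreducible on a finite state space and $P(s,s)\geq\eta$, there is an integer $\ell$ and a constant $\kappa>0$ such that $P^\ell(s,s')\geq\kappa$ for all $s,s'\in\S$ (a standard consequence — positivity of all entries of $P^\ell$ after possibly enlarging $\ell$, plus the diagonal dominance to guarantee it holds for that fixed $\ell$). Equivalently, the two-sided conditioning $\Pr(S_0=s_0,S_n=s_n)$ is bounded below, uniformly in $s_0,s_n$, once $n\geq \ell$; this will let me pass from unconditioned to conditioned expectations at the cost of a constant factor.

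The core of the argument is a local ``coupling/insertion'' trick at each site $i$. Consider the event that $S_{i-1}=S_i=S_{i+1}$, i.e.\ two consecutive non-jumps. I would show that for \emph{any} configuration of the rest of the chain, and for all $s_0,s_n$, the conditional probability
\begin{equation}\nonumber
 \Pr\bigl(S_{i-1}=S_i=S_{i+1}\mid S_0=s_0,S_n=s_n\bigr)\geq \delta_0
\end{equation}
for some fixed $\delta_0>0$ independent of $i$ (for $i$ not too close to the endpoints), by a standard ratio argument: compare the chain that stays put at steps $i-1\to i\to i+1$ with the worst alternative, using $P(s,s)\geq\eta$ on the numerator and the uniform lower bound $P^\ell(\cdot,\cdot)\geq\kappa$ on the pieces before $i-1$ and after $i+1$ to control the denominator $\Pr(S_0=s_0,S_n=s_n)$. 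Summing the indicators of non-jumps, $n-|J|=\sum_{i=1}^n \Ind(S_{i-1}=S_i)$, and using that each pair of sites $\{i,i+1\}$ contributes at least $\delta_0$ to $\Ex(n-|J|)$ along a disjoint (or overlap-corrected) collection of roughly $n/2$ such pairs, gives $\Ex(n-|J|)\geq \delta n$ for a suitable $\delta>0$ and $n\geq n_0$, which is the claim.

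The step I expect to be the main obstacle is making the local lower bound \emph{uniform in the boundary states $s_0,s_n$ and in the index $i$}, i.e.\ controlling the normalising constant $\Pr(S_0=s_0,S_n=s_n)$ from above while keeping the ``stay-put'' numerator from below, and doing so with a single pair $(\ell,\kappa)$ that works simultaneously. The clean way around this is to first establish the two-sided bound $c_1\leq \Pr(S_0=s_0,S_n=s_n)\,\lambda_2^{-?}$—more precisely, to note $\Pr(S_0=s_0,S_n=s_n)=\nu(s_0)P^n(s_0,s_n)$ and that $P^n(s_0,s_n)$ is bounded above and below by positive constants once $n\geq n_0$ (finite state space, primitive $P$), so the conditioning only rescales probabilities by a bounded factor; then the local estimate reduces to the purely unconditioned statement that inserting a ``stay'' at position $i$ has probability at least $\eta^2$ times a bounded correction. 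Handling indices $i$ within distance $\ell$ of $0$ or $n$ separately (they number at most $2\ell$, a constant, and can simply be discarded from the sum) keeps the bound uniform.
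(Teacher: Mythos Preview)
Your approach is correct but takes a genuinely different route from the paper's. The paper splits the chain at a midpoint $n_1\approx n/2$, applies the trivial unconditioned bound $\Ex(|J_1|\mid S_0)\leq(1-\eta)n_1$ on the first half, and then transfers this to the two-sided conditioning by writing $\Ex(|J_1|\mid S_0,S_n)=\Ex(|J_1|\Ind(S_n=s_n)\mid S_0)/\Pr(S_n=s_n\mid S_0)$ and using convergence $P^{n-n_1}(\cdot,s_n)\to\pi(s_n)$ to show the conditioning distorts by at most a factor close to~$1$; the second half is then bounded crudely by $n-n_1$. Your route is instead local: you lower-bound each holding probability $\Pr(S_{i-1}=S_i\mid S_0,S_n)$ directly via primitivity and sum over $i$. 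This avoids the asymmetric split and is conceptually cleaner, at the price of a worse constant (yours involves the minorisation level $\kappa$, the paper's is essentially $\eta/2$). Two simplifications of your outline are worth noting: the one-step event $S_{i-1}=S_i$ already suffices, so the triple $S_{i-1}=S_i=S_{i+1}$ and the disjoint-pair bookkeeping can be dropped; and the denominator satisfies $P^n(s_0,s_n)\leq 1$ trivially, so you only need lower bounds on the numerator pieces, not a two-sided estimate on $P^n$. In fact the identity $\Pr(S_{i-1}=S_i\mid S_0=s_0,S_n=s_n)\geq\eta\,P^{n-1}(s_0,s_n)/P^n(s_0,s_n)$ makes your argument uniform in $i$ (no boundary exclusion needed) with constant approaching $\eta$, which is sharper than either version.
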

\begin{proof} Under the assumptions of the lemma, the Markov chain is ergodic. Let $\pi$ be its (strictly positive) stationary distribution. 
Fix $\eps>0$.  There exists $n_0$ such that for $n\geq n_0/2-1$ we have for all $s_0,s_n\in\S$,
\begin{equation}\nonumber
(1-\eps)\pi(s_n)\leq \Pr(S_n=s_n|S_0=s_0)\leq (1+\eps)\pi(s_n). 
\end{equation}    
Now if $n\geq n_0$ then we can  choose $n_1$ such that $n_1\geq n_0/2$ and $n-n_1\geq n_0/2-1$. Let $J_{1}=\{i\in [1\colon n_1] : S_i\not=S_{i-1}\}$.
With the notation $\rho(j,s)=\Pr(|J_{n_1}|=j,S_{n_1}=s|S_0=s_0)$ we have
\begin{equation*}
\begin{split}
 \Ex(|J_{1}|\Ind(S_n=s_n)|S_0=s_0)&=\sum_j\sum_s j \rho(j,s )P^{n-n_1}(s,s_n)\\
                                    &\leq \sum_j\sum_s j \rho(j,s) (1+\eps)\pi(s_n)\\
                                    &=\Ex(|J_{n_1}||S_0=s_0) (1+\eps)\pi(s_n)\\
                                    &\leq (1-\eta)n_1 (1+\eps)\pi(s_n).
\end{split}
\end{equation*}
The last inequality follows from the fact that $\Pr(S_i=s|S_{i-1}=s)\geq \eta$. Consequently,
\begin{equation}\nonumber
\begin{split}
\Ex(|J_{1}||S_n=s_n,S_0=s_0)&=\frac{\Ex(|J_{1}|\Ind(S_n=s_n)|S_0=s_0)}{\Pr(S_n=s_n|S_0=s_0)}\\
                              &\leq   (1-\eta)\frac{1+\eps}{1-\eps}n_1=(1-2\delta)n_1,  
\end{split}
\end{equation}
where $\delta>0$ if $\eps$ is chosen sufficiently small. Finally, since $|J|\leq |J_{1}|+n-n_1$ and $n_1\geq n/2$, we obtain
\begin{equation}\nonumber
\Ex(|J||S_n=s_n,S_0=s_0)\leq (1-2\delta)n_1 +n-n_1=n-2\delta n_1\leq (1-\delta)n. 
\end{equation}
We get the conclusion.
\end{proof}

\begin{lemma}\label{lem: Jfixobs}
Let the assumptions of Lemma \ref{lem: Jfixends} hold. Numbers $n_0$ and $\delta$ are the same as in this lemma. Let $k$ be fixed. 
If $n\geq (k+1)n_0$, then for arbitrarily chosen indices
$0\leq i_1\leq i_2\leq\cdots\leq i_k\leq n$ and states $s_1,\ldots,s_k$ we have that 
\begin{equation}\nonumber
 \Ex(|J||S_{i_1}=s_1,\ldots,S_{i_k}=s_k)\leq \left(1-\frac{\delta}{k+1}\right)n,
\end{equation}   
provided that $\Pr(S_{i_1}=s_1,\ldots,S_{i_k}=s_k)>0$.
\end{lemma}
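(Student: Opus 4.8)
The plan is to condition on the event $E=\{S_{i_1}=s_1,\ldots,S_{i_k}=s_k\}$ and to split the chain into the blocks delimited by the indices $i_1,\ldots,i_k$. Set $i_0=0$ and $i_{k+1}=n$, so that $[1\colon n]$ is the disjoint union of the blocks $B_j=[i_j+1\colon i_{j+1}]$, $j=0,\ldots,k$ (some possibly empty, when $i_j=i_{j+1}$), and $|J|=\sum_{j=0}^{k}|J\cap B_j|$. By the Markov property, conditionally on $E$ the behaviour of the chain inside $[i_j\colon i_{j+1}]$ depends only on the endpoint values that belong to $E$: on both $S_{i_j}=s_j$ and $S_{i_{j+1}}=s_{j+1}$ when $1\le j\le k-1$, on $S_{i_1}=s_1$ alone for $B_0$, and on $S_{i_k}=s_k$ alone for $B_k$. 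In particular, for an interior block $\Ex\bigl(|J\cap B_j|\mid E\bigr)$ is exactly a quantity of the form bounded in Lemma~\ref{lem: Jfixends}. All the conditional expectations are well defined since $\Pr(E)>0$ by assumption.

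A pigeonhole argument then supplies one long block: since $\sum_{j=0}^{k}(i_{j+1}-i_j)=n$ and there are $k+1$ blocks, some block $B_{j^*}$ has length $\ell:=i_{j^*+1}-i_{j^*}\ge n/(k+1)$, hence $\ell\ge n_0$ by the hypothesis $n\ge(k+1)n_0$, so Lemma~\ref{lem: Jfixends} applies to $B_{j^*}$. If $B_{j^*}$ is an interior block, Lemma~\ref{lem: Jfixends} gives directly $\Ex(|J\cap B_{j^*}|\mid E)\le(1-\delta)\ell$. If $B_{j^*}=B_0$ (resp.\ $B_{j^*}=B_k$), only one endpoint is pinned, and I would obtain the same bound by averaging the estimate of Lemma~\ref{lem: Jfixends} over the free endpoint via the tower property, using $\sum_{s\in\S}\Pr(S_0=s\mid S_{i_1}=s_1)=1$ (resp.\ $\sum_{s\in\S}\Pr(S_n=s\mid S_{i_k}=s_k)=1$).

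Combining this with the trivial bound $|J\cap B_j|\le i_{j+1}-i_j$ for every remaining block yields
\begin{equation*}
\Ex(|J|\mid E)\;\le\;(1-\delta)\ell+\sum_{j\ne j^*}(i_{j+1}-i_j)\;=\;n-\delta\ell\;\le\;\Bigl(1-\frac{\delta}{k+1}\Bigr)n,
\end{equation*}
where the last step uses $\ell\ge n/(k+1)$; this is the claimed inequality.

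The routine part is the Markov-property bookkeeping showing that the block expectation only ``sees'' that block's fixed endpoints, together with the degenerate cases $i_1=0$, $i_k=n$, or $i_j=i_{j+1}$, which merely make some blocks empty. The single point that needs a little care is the treatment of the two boundary blocks $B_0$ and $B_k$, to which Lemma~\ref{lem: Jfixends} does not apply verbatim (it requires both endpoints to be pinned) and where one must average over the free endpoint; beyond this I do not expect any genuine obstacle.
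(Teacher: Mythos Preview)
Your argument is correct and follows essentially the same route as the paper: partition $[1\colon n]$ into the $k+1$ blocks delimited by $i_0=0,i_1,\ldots,i_k,i_{k+1}=n$, apply the pigeonhole principle to find a block of length at least $n/(k+1)\ge n_0$, invoke Lemma~\ref{lem: Jfixends} on that block via the (two-sided) Markov property, and use the trivial bound on the remaining blocks. Your treatment of the boundary blocks $B_0$ and $B_k$ by averaging over the free endpoint is in fact more explicit than the paper's own proof, which writes the equality $\Ex(|J^{(j)}|\mid S_{i_1}=s_1,\ldots,S_{i_k}=s_k)=\Ex(|J^{(j)}|\mid S_{i_{j-1}}=s_{j-1},S_{i_j}=s_j)$ uniformly in $j$ without commenting on the cases $j=1$ and $j=k+1$.
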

\begin{proof} Write $n_j=i_j-i_{j-1}$, where by convention $i_0=0$ and $i_{k+1}=n$. Let $J^{(j)}=\{i\in [i_{j-1}+1\colon i_j] : S_i\not=S_{i-1}\}$
for $j=1,\ldots,k+1$. (Let us mention that we have not excluded the case when $i_{j-1}=i_j$   for some $j$ and thus $J^{(j)}=\emptyset$.
Of course, then we must have $s_{j-1}=s_j$ in the conditional expectation.) If $n\geq (k+1)n_0$ then there is at least one $j$ such that 
$n_j\geq n/(k+1)\geq n_0$, because $\sum_j n_j=n$. From Lemma  \ref{lem: Jfixends} we infer that for this $j$ we have
\begin{equation}\nonumber
 \Ex(|J^{(j)}||S_{i_1}=s_1,\ldots,S_{i_k}=s_k)= \Ex(|J^{(j)}||S_{i_{j-1}}=s_{j-1},S_{i_j}=s_j)\leq (1-\delta)n_j. 
\end{equation} 
The equality above is true because of the (two-sided) Markov property.  Analogously as in the previous proof, we observe that
$|J|\leq |J^{(j)}|+n-n_j$, so
\begin{equation}\nonumber
\Ex(|J||S_{i_1}=s_1,\ldots,S_{i_k}=s_k)\leq (1-\delta)n_j +n-n_j=n-\delta n_j\leq \left(1-\frac{\delta}{k+1}\right)n. 
\end{equation}
\end{proof}


In the next proposition we establish a geometric drift condition for the Markov chain  $X_0,X_1,\ldots,X_m,\ldots$.
Consider a single step, that is transition from $X_{m-1}=X$ to $X_{m}=X\p$. Thus 
only Monte Carlo randomness in Algorithm 1 is taken into account. The dependence on the input trajectory $X$ (and on $Y$) is implicitly 
assumed but indicated only when necessary.   
Recall that $|\jump(X)|$ is the number of true jumps of the trajectory $X([\tmin,\tmax])$. 

\begin{proposition}\label{pr: drift}
There exist $q<1$ and $c<\infty$ such that in a single step of the Rao and Teh's algorithm we have 
$\Ex(|\jump(X\p)||X)\leq q|\jump(X)|+c$.   
\end{proposition}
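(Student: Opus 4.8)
The plan is to bound the number of potential jumps produced in step~(V) and then to show that step~(S) turns only a bounded fraction of them into true jumps on average. Write $n=|T\p|$; almost surely the union $T\p=T_J\cup V$ is disjoint (a Poisson process avoids a fixed finite set), so $n=|\jump(X)|+|V|$. Since $0\le\lambda-Q(X(t))\le\lambda$, the new points satisfy $\Ex(|V|\mid X)=\int_{\tmin}^{\tmax}\bigl(\lambda-Q(X(t))\bigr)\,\d t\le\lambda(\tmax-\tmin)=:\Lambda$, a constant independent of $X$. So it suffices to prove a bound of the form $\Ex(|\jump(X\p)|\mid T\p)\le q\,n+c_0$ with $q<1$ and then average over $V$.

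Fix $T\p$, hence the points $T\p_1<\dots<T\p_n$ and the indices $0\le i_1\le\dots\le i_k\le n$ of the intervals containing $\tobs_1,\dots,\tobs_k$. Conditionally on $T\p$, the skeleton $S\p=(S\p_0,\dots,S\p_n)$ has law $p(S\p\mid T\p,Y)$, the posterior of a hidden discrete-time Markov chain whose prior has transition matrix $P$ of \eqref{eq: P_uniformization} and whose likelihood $L(Y\mid X\p)=\prod_{j=1}^k L_j(Y_j\mid S\p_{i_j})$ depends on $S\p$ only through $(S\p_{i_1},\dots,S\p_{i_k})$. Consequently $p(S\p\mid T\p,S\p_{i_1}=s_1,\dots,S\p_{i_k}=s_k,Y)=p(S\p\mid T\p,S\p_{i_1}=s_1,\dots,S\p_{i_k}=s_k)$, i.e.\ given the states at the observation indices the skeleton is just the prior $P$-chain conditioned on those states, and
\begin{equation}\nonumber
p(S\p\mid T\p,Y)=\sum_{s_1,\dots,s_k}p\bigl(S\p_{i_1}=s_1,\dots,S\p_{i_k}=s_k\mid T\p,Y\bigr)\,p\bigl(S\p\mid T\p,S\p_{i_1}=s_1,\dots,S\p_{i_k}=s_k\bigr),
\end{equation}
the sum over tuples of positive probability. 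Since $Q$ is irreducible so is $P$, and since $\lambda>\qmax$ we have $P(s,s)=1-Q(s)/\lambda\ge1-\qmax/\lambda=:\eta>0$; thus Lemmas~\ref{lem: Jfixends} and~\ref{lem: Jfixobs} apply to this $P$ and provide constants $n_0,\delta$.

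If $n\ge(k+1)n_0$, Lemma~\ref{lem: Jfixobs} applied to each mixture component (with $J\p=\{i:S\p_{i-1}\neq S\p_i\}$ in the role of $J$) gives $\Ex(|J\p|\mid T\p,S\p_{i_1}=s_1,\dots,S\p_{i_k}=s_k)\le(1-\tfrac{\delta}{k+1})n$, and averaging over the mixture yields $\Ex(|\jump(X\p)|\mid T\p)=\Ex(|J\p|\mid T\p)\le(1-\tfrac{\delta}{k+1})n$; if $n<(k+1)n_0$ the trivial bound $|J\p|\le n$ gives the same inequality up to the additive constant $(k+1)n_0$. Averaging over $V$ and using $n=|\jump(X)|+|V|$ and $\Ex(|V|\mid X)\le\Lambda$ gives
\begin{equation}\nonumber
\Ex(|\jump(X\p)|\mid X)\le\Bigl(1-\tfrac{\delta}{k+1}\Bigr)|\jump(X)|+\Bigl(1-\tfrac{\delta}{k+1}\Bigr)\Lambda+(k+1)n_0,
\end{equation}
which is the assertion with $q=1-\tfrac{\delta}{k+1}<1$ and $c=\Lambda+(k+1)n_0$. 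I expect the main obstacle to be the second paragraph: Lemmas~\ref{lem: Jfixends}--\ref{lem: Jfixobs} concern a chain conditioned \emph{hard} on its values at fixed indices, whereas step~(S) conditions \emph{softly} through the likelihood $L$; the mixture decomposition together with the conditional-independence identity $p(T_{-J}\mid T_J,S_J,Y)=p(T_{-J}\mid T_J,S_J)$ used by Rao and Teh is exactly what bridges the two.
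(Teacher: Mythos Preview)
Your proof is correct and follows essentially the same route as the paper's: both bound $\Ex(|V|\mid X)$ by $\lambda(\tmax-\tmin)$ in stage~(V) and then invoke Lemma~\ref{lem: Jfixobs} in stage~(S) after reducing the posterior skeleton law to a mixture of $P$-chains conditioned on their values at the observation indices $i_1,\dots,i_k$. The only cosmetic difference is that the paper phrases this reduction via a rejection-sampling device (simulate $S'$ from the prior $P$-chain, accept with probability $\prod_j L_j(S'_{i_j})/\lmax_j$), whereas you state the conditional-independence identity $p(S'\mid T',S'_{i_1},\dots,S'_{i_k},Y)=p(S'\mid T',S'_{i_1},\dots,S'_{i_k})$ and the resulting mixture decomposition directly; these are the same observation, and the constants $q=1-\delta/(k+1)$ and $c$ agree up to harmless slack.
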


\begin{proof} Let us analyse what happens in both two stages of Algorithm 1. We fix the initial $X$. In the first
stage we add a new set $V$ of potential jumps. Since
$|V|$ has the Poisson distribution with intensity $\int_{\tmin}^{\tmax} (\lambda-Q(X(t))\d t$, we have 
$\Ex|V|\leq \mu=\lambda(\tmax-\tmin)$. Thus we obtain $T\p$ with 
$\Ex|T\p|\leq |\jump(X)|+\mu$. In the second stage $T\p$ is ``thinned'' to $T\p_{J\p}$. We will prove that
\begin{equation}\label{eq: thin}
 \Ex(|J\p||T\p)\leq \left(1-\frac{\delta}{k+1}\right)|T\p|, \text{ if } |T\p|\geq (k+1) n_0. 
\end{equation}
Inequality \eqref{eq: thin} implies $\Ex(|\jump(X\p)|)=\Ex(|J\p|)\leq (1-\delta/(k+1))(|\jump(X)|+\mu)$. The conclusion
of the theorem follows with $q=(1-\delta/(k+1))$. To ensure that the conclusion holds also if $|T\p|<(k+1) n_0$, we can choose $c=q\mu+ (k+1) n_0$
so that $|J\p|<c$.
 
It remains to show \eqref{eq: thin}. We consider the second stage of Algorithm 1 
(sampling a new skeleton $S\p$ from $p(S\p|T\p,Y)$). From now on, the result of the first stage (updating $T$ to $T\p$)  
is fixed. 
Note that our assumption about the structure of observations \eqref{eq: obs} implies that  
$p(S\p|T\p,Y)\propto  p(S\p|T\p)\prod_{j=1}^r L_j(Y_j|S\p_{i_j})$, where $i_j=\max\{i:T\p_i\leq \tobs_j\}$. 
Since $p(S\p|T\p)$ is the distribution of a Markov chain, it follows that
\begin{equation}\nonumber
 p(S\p|T\p,Y)\propto  \nu(S\p_0)\prod_{i=1}^{|T\p|} P(S\p_{i-1},S\p_i) \prod_{j=1}^k L_j(S\p_{i_j}),
\end{equation} 
where $P$ is given by \eqref{eq: P_uniformization} and $L_j(s)=L_j(Y_j|s)$. 

Although the actual sampling from $p(S\p|T\p,Y)$ is by FFBS, exactly the same result can be obtained via rejection sampling as follows. 
\begin{itemize}
 \item[(S1)] Simulate Markov chain $S\p$ (of length $|T\p|$) with transition matrix $P$ given by \eqref{eq: P_uniformization} and initial distribution $\nu$.
 \item[(S2)] The skeleton $S\p$ is accepted with probability $\prod_{j=1}^k (L_j(S\p_{i_j})/\lmax_j)$, with some $\lmax_j\geq \max_s L_j(s)$.
 Otherwise go to (S1).
\end{itemize}
(Of course the rejection method is highly inefficient and is considered only to clarify presentation.)
Now we are in a position to use Lemma \ref{lem: Jfixobs}, {with $\eta=\lambda-\qmax$}. If $|T\p|\geq (k+1) n_0$ then
\begin{equation}\nonumber
\begin{split}
 & \Ex(|J\p||T\p)=\Ex(|J\p||S\p\text{ accepted})\\
 & = \sum_{s_1,\ldots,s_k}\Ex(|J\p||S\p_{i_1}=s_1,\ldots,S\p_{i_k}=s_k)\Pr(S\p_{i_1}=s_1,\ldots,S\p_{i_k}=s_k|S\p\text{ accepted})\\
 & \leq \left(1-\frac{\delta}{k+1}\right)|T\p|  \sum_{s_1,\ldots,s_k}\Pr(S\p_{i_1}=s_1,\ldots,S\p_{i_k}=s_k|S\p\text{ accepted})\\
 & \leq \left(1-\frac{\delta}{k+1}\right)|T\p|
\end{split}
\end{equation} 
(of course in the sum above we can omit ``impossible sequences'' $s_1,\ldots,s_k$).  We have proved \eqref{eq: thin} and we are done.
\end{proof}
{\begin{remark} The proof of Proposition \ref{pr: drift} could be significantly simpler if we assumed that the likelihood $L_j(s)$ is 
strictly positive. However, this assumption is not satisfied in many interesting applications (e.g. if the observations are without noise).
 \end{remark}}

\begin{proposition}\label{pr: small}
The set $\{X: |\jump(X)|\leq h\}$ is 1-small for every $h$. 
\end{proposition}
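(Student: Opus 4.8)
The plan is to verify a one-step minorization for the set $C=\{X:|\jump(X)|\le h\}$, i.e.\ to produce $\varepsilon>0$ and a probability measure $\varphi$ on trajectories with $A(X,\cdot)\ge\varepsilon\,\varphi(\cdot)$ for all $X\in C$. Put $\eta=\lambda-\qmax>0$ and $T=\tmax-\tmin$. In a step of Algorithm~1 started from $X\in C$, the Poisson process $V$ of step (V) has intensity $\lambda-Q(X(t))\ge\eta$, so we may write $V=V^-\cup V^+$ with $V^-$ a homogeneous Poisson process of rate $\eta$ and $V^+$ an independent Poisson process of rate $\qmax-Q(X(t))\le\qmax$; in particular $\Pr(V^+=\emptyset)\ge\e^{-\qmax T}$ uniformly in $X$, and on that event $T\p=T_J\cup V^-$.

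The main estimate is obtained on the event $\{V^+=\emptyset\}$, intersected further with the event that the FFBS skeleton $S\p$ does \emph{not} jump at any of the at most $h$ points of $T_J$. On this event the returned trajectory has its true jumps only among the points of $V^-$, and is determined by $V^-$ together with the values of $S\p$ on the $|V^-|+1$ blocks cut out by the points of $V^-$. Using the formula for $p(S\p|T\p,Y)$ recalled in the proof of Proposition~\ref{pr: drift}, the probability of a prescribed such block-configuration factorises just as in that proof; the extra self-loop factors contributed by the $T_J$-points amount to at least $(1-\qmax/\lambda)^{h}$ in total (there are at most $h$ of these points, and each contributes a factor $P(s,s)\ge 1-\qmax/\lambda$), whereas the normalising constant of $p(S\p|T\p,Y)$ is at most $\prod_{j=1}^k\lmax_j$ (bound each $L_j$ by $\lmax_j$; the remaining sum over skeletons equals $1$). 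Integrating over $V^-$ then yields $A(X,\cdot)\ge c\,\bar\varphi(\cdot)$, where $c=\e^{-\qmax T}(1-\qmax/\lambda)^{h}\big/\prod_j\lmax_j>0$ does not depend on $X$ and $\bar\varphi$ is the finite measure on trajectories obtained by sampling a rate-$\eta$ Poisson process $W$, drawing a skeleton $R$ with weight $\nu(R_0)\prod_i P(R_{i-1},R_i)\prod_j L_j(R_{i_j})$, and keeping its true jumps.

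It then remains to check that the total mass $m$ of $\bar\varphi$ is positive; then $\varphi=\bar\varphi/m$ is a probability measure and the proposition follows with $\varepsilon=cm$. This is the only point at which the hypotheses of the theorem really enter. Since $Q$ is irreducible there is $D$ such that any state is joined to any other by a $Q$-path of length at most $D$; padding with self-loops — which carry positive $P$-probability because $\lambda>\qmax$ — turns this into a positive-probability $P$-path of any prescribed length $\ge D$. By the standing assumption we may fix a trajectory of positive prior density with positive likelihood and read off from it states $\sigma_0,\dots,\sigma_k$ with $\nu(\sigma_0)>0$ and $L_j(Y_j|\sigma_j)>0$. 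With positive probability the Poisson process $W$ leaves room for at least $D$ of its points between any two consecutive observation times (and at the two ends), and then one can build a skeleton that starts at $\sigma_0$, equals $\sigma_j$ at the last $W$-point before $\tobs_j$, and moves between these values along positive $P$-transitions; such a skeleton has positive weight, so $m>0$.

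The principal obstacle is precisely this positivity of $\bar\varphi$, and the difficulty is genuine: because the output jumps of Algorithm~1 sit at Poisson locations, no single trajectory carrying a jump is charged by $A(X,\cdot)$, so $\varphi$ cannot be a point mass; and because the likelihoods $L_j$ may vanish (cf.\ the remark after Proposition~\ref{pr: drift}), $\varphi$ cannot be supported on constant trajectories either. One is therefore forced to exhibit a positive-weight skeleton threading all the observation windows, which is where irreducibility of $Q$ (together with the aperiodicity supplied by the positive holding probabilities) is used. The remaining ingredients — the block bookkeeping for $p(S\p|T\p,Y)$ and the elementary Poisson bounds — are routine.
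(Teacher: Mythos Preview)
Your argument is correct and shares the paper's core strategy: split the added Poisson points as $V=V^-\cup V^+$ with $V^-$ of uniform rate $\lambda-\qmax$ independent of $X$, force the (at most $h$) old true-jump points to become virtual via the self-loop bound $P(s,s)\ge 1-\qmax/\lambda$, and invoke irreducibility of $Q$ together with the standing assumption to exhibit a positive-weight skeleton that threads the observation times. Two inessential differences from the paper's version: (i) the paper does not insist on $V^+=\emptyset$ but allows $V^{\rm rest}$ to be arbitrary and absorbs its points into the self-loop bound via Jensen's inequality applied to $\eta^{|V^{\rm rest}|}$; (ii) the paper's minorizing measure is supported on a \emph{single} explicitly constructed state sequence $s^*$ (only the jump times $V^0\in\mathcal{T}$ are random), analysed through the rejection-sampling description of FFBS, whereas your $\bar\varphi$ averages over all skeletons with the posterior-numerator weight and bounds the normalising constant directly by $\prod_j\lmax_j$. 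Your route is a touch slicker, but the positivity check $m>0$ you need at the end is exactly the construction of $s^*$ and $\mathcal{T}$ that the paper carries out up front.
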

\begin{proof}
We are going to construct a skeleton $\ss$ and a set of the corresponding times of jumps $\mathcal{T}$ which are achievable with 
uniformly lower bounded probability, in one step of the algorithm,  from any trajectory $X$ such that $|\jump(X)|\leq h$.   
Let us first choose a sequence $\sss=(\sss_1,\sss_2,\dots,\sss_k)$ such that
\[\prod_{j=1}^k (L_j(\sss_j)/\lmax_j)=\beta_1>0\;.\]
By irreducibility of kernel $P$ we can extend $\sss$ to a possible skeleton $\ss$, i.e.\ we define a sequence $\ss=(\ss_0,...,\ss_l)$ for some  
$l\geq k$ such that $\sss$ is a subsequence of $\ss$, $\ss_{i-1}\not=\ss_i$ and 
\[\nu(\ss_0)\prod_{i=1}^{l} P(\ss_i,\ss_{i+1})= \beta_2>0\;.\]
We denote by $i_1,\dots,i_k$ indices of elements of $\sss$ in $\ss$. Now we define a set $\mathcal{T}$ as follows.
\[\mathcal{T}=\{(t_1,...,t_l)\;:\; t_{i_j}\leq\tobs_j,\ t_{i_j+1}>\tobs_j\text{ for } j=1,\dots,k\}\;.\]

Fix $X$ with  $|\jump(X)|\leq h$. Let us first describe a special way in which Algorithm 1 can be executed. 
In stage (V) we can independently sample two Poisson processes on the interval $[\tmin,\tmax]$, say $V^0$  and  $V^{\rm rest}$,
with intensities $\lambda-\qmax$ and $\qmax-Q(X(t))$, respectively. Next set $V=V^0\cup V^{\rm rest}$ and $T\p=\jump(X)\cup V$. Note that 
\[\Pr(V^0\in\mathcal{T})=\beta_3>0\;.\]
In stage (S) of the Algorithm 1 we construct skeleton $S\p$. 
Assume that we use rejection sampling, defined by (S1) and (S2) in the previous proof. We use the notation as in Algorithm 1. 
Recall that $J\p=\{i: S_{i-1}\p\neq S_{i}\p\}$.
We will bound from below the probability that 
at stage (S1) all points belonging to $V^{\rm rest}\cup\jump(X)$ are changed to virtual jumps, while jumps at $V^0$ form the skeleton $\ss$.
We have
\begin{equation}\nonumber
\begin{split}
 \Pr(T\p_{J\p}= V^0,S\p_{J\p}=\ss|T\p)&\geq\beta_2\sum_{k=0}^\infty \eta^{|\jump(X)|+k}\Pr(V^{\rm rest}=k|X)\\
                                      &= \beta_2\Ex (\eta^{h+|V^{\rm rest}|}|X)\geq \beta_2\eta^{h+\Ex(|V^{\rm rest}||X)}                        
\end{split} 
\end{equation}
by Jensen inequality. Since $V^{\rm rest}$ is a Poisson process with intensity bounded by $\qmax$ we have 
$\Ex(|V^{\rm rest}||X)\leq\qmax(\tmax-\tmin)$, therefore we conclude that $\Pr(T\p_{J\p}= V^0,S\p_{J\p}=\ss|T\p)\geq \beta_4>0$. 
The probability of accepting the obtained skeleton $S^\prime=\ss$ at stage (S2) is clearly equal to $\beta_1$. 
{We have shown that $\Pr(V^0\in\mathcal{T},T\p_{J\p}= V^0, S\p_{J\p}=\ss|X)\geq\beta_1\beta_4\beta_3=\beta>0$ 
whenever $|\jump(X)|\leq h$. 
Moreover, if $T\p_{J\p}= V^0$ and $S\p_{J\p}=\ss$ then  the 
probability distribution of $(T\p_{J\p},S\p_{J\p})\equiv X\p$ does not depend on $X$.}
This completes the proof.
\end{proof}

\newproof{proofmain}{Proof of Theorem \ref{th: main}}
\begin{proofmain}
Theorem \ref{th: main} immediately follows from Propositions \ref{pr: drift} and \ref{pr: small}, see for example \cite[Th. 9]{roberts2004general}. 
\end{proofmain}
\bibliography{refs}

\end{document}